\newtheorem{theorem}{Theorem}%[section]
\newtheorem{remark}{Remark}
\newtheorem{observation}{Observation}
 \newcommand{\ket}[1]{|#1\rangle}
 \newcommand{\bra}[1]{\langle #1|}
 \newcommand{\project}[1]{\ket{#1}\bra{#1}}
 \newcommand{\Id}{{\mathbb I}}
 \newcommand{\e}{{\mathrm e}}
 \newcommand{\T}{{\mathrm t}}
\newcommand{\dif}{{\mathrm d}}
\newcommand{\Tr}{{\mathrm {Tr}}}
\newcommand{\diag}{{\mathrm {diag}}}
\begin{document}
\title{Progress on Quantum Discord of  Two-Qubit States: Optimization and Upper Bound}

\author{S. Javad Akhtarshenas\footnote{akhtarshenas@phys.ui.ac.ir}}
\affiliation{Department of Physics, University of Isfahan,
 Isfahan, Iran}
\affiliation{Quantum Optics Group, University of Isfahan,
 Isfahan, Iran}
 \affiliation{Department of Physics, Ferdowsi University of Mashhad,
 Mashhad, Iran}
\author{Hamidreza Mohammadi\footnote{h.mohammadi@sci.ui.ac.ir}}\affiliation{Department of Physics, University of Isfahan,
 Isfahan, Iran}
\affiliation{Quantum Optics Group, University of Isfahan,
 Isfahan, Iran}
\author{Fahimeh S. Mousavi\footnote{fahimeh_mousavi_90@sci.ui.ac.ir}}\affiliation{Department of Physics, University of Isfahan,
 Isfahan, Iran}
\author{Vahid Nassajpour\footnote{v.nasaj@yahoo.com}}
\affiliation{Department of Physics, University of Isfahan,
 Isfahan, Iran}

\begin{abstract}
%\noindent
Calculation of the quantum discord requires to find the minimum of the quantum conditional entropy $S(\rho^{AB}|\{\Pi^B_{k}\})$ over all measurements on the subsystem  $B$. In this paper, we provide a simple relation for the conditional entropy as the difference of two Shannon entropies. The relation is suitable for calculation of the quantum discord in the sense that it can be used to  obtain the quantum discord for some classes of two-qubit states such as Bell-diagonal states and a three-parameter subclass of X states, without the need for    minimization.   We also present an analytical procedure of optimization and obtain conditions under which the quantum conditional entropy of a general two-qubit state is stationary. The presented relation is also used to find a tight upper bound on the quantum discord.
\end{abstract}

\keywords{Quantum discord, Two-qubit states, Conditional entropy, }

\pacs{03.67.-a, 03.65.Ta, 03.65.Ud}
\maketitle
\section{INTRODUCTION}

Entanglement is the specific feature of quantum systems which reveals that complete information about parts of a composite system does not include complete information of the whole system \cite{EPR,S}. However, this is not the only weird character of a quantum system. For instance, the collapse of one part of a non-entangled bipartite quantum system after a measurement on the other part is another feature unique to quantum systems. The quantity that captures this feature is the quantum discord \cite{ZurekPRL2001,HendersonJPA2001}. The key idea of the concept of quantum discord is the superposition principle and the vanishing of discord shown to be a criterion for quantum to classical transition \cite{ZurekPRL2001}. Furthermore, quantum discord has a simple interpretation in thermodynamics and has been used in analyzing the power of a quantum Maxwell's demon \cite{ZurekPRA2003}. It has also been employed in the study of pure quantum states as a resource and  performance of deterministic quantum computation with one pure qubit \cite{KnillPRL1998}. The authors of Refs. \cite{DattaPRL2008,LanyonPRL2008} showed that non-classical correlations other than entanglement can be responsible for the quantum computational efficiency of deterministic quantum computation with one pure qubit \cite{KnillPRL1998} and brought this obscured correlation measure to the spot light zone. After this discovery, quantum discord becomes one of the most frequent topics of researches in the field of quantum information theory.  Indeed, quantum discord is the difference between two classically equivalent definitions of mutual information in the quantum mechanics language.
In mathematical sense, discord could be obtained by eliminating the whole of classical correlation from the total correlation measured by mutual information, by means of the most destructive measurement on the one party of the system. Mutual information of a bipartite system  can be written as
\begin{equation}\label{MI}
I(\rho^{AB})=S(\rho^{A})+S(\rho^{B})-S(\rho^{AB}),
\end{equation}
where $\rho^{A}$ and $\rho^{B}$ refer to the reduced density
matrices of the subsystems  $A$ and $B$, respectively, $\rho^{AB}$ is the density matrix of the system as
the whole, and $S(\rho)=-\Tr(\rho\,\log_{2}\rho)$ is the Von Neumann entropy.
The classical correlation between the parts of a bipartite system
can be obtained by use of the measurement-base conditional density
operator and can be written as \cite{ZurekPRL2001}
\begin{equation}\label{CC}
C_{B}(\rho^{AB})=\underset{\{\Pi^B_{k}\}}{\sup}\{S(\rho^{A})-S(\rho^{A}|\{\Pi^B_{k}\})\}.
\end{equation}
Here the maximum is taken over all projective measurement $\{\Pi^B_{k}\}$ on the subsystem $B$ \cite{ZurekPRL2001},
and $S(\rho^{A}|\{\Pi^B_{k}\})=\sum_{k}p_{k}S(\rho^A_{k})$ is
the conditional entropy of the subsystem $A$, with $\rho^A_{k}=\Tr_{B}((\Id^{A}\otimes\Pi^B_{k})\,\:\rho^{AB}\,\:(\Id^{A}\otimes\Pi^B_{k}))/p_{k}$ as the post-measurement state of the subsystem $A$
and $p_{k}=\Tr(\rho^{AB}\:(\Id^{A}\otimes\Pi^B_{k}))$ being the probability of the $k$-th outcome.  The maximum performed in Eq. (\ref{CC}) can  be taken also over all the positive operator valued measures (POVM) \cite{HendersonJPA2001} and these two definitions  give in general inequivalent results.
Accordingly, discord
can be calculated as follows
\begin{equation}\label{Discord}
D_{B}(\rho^{AB})=I(\rho^{AB})-C_{B}(\rho^{AB}).
\end{equation}
However, one can swap the role of the subsystems $A$ and $B$ to
obtain $D_{A}(\rho^{AB})$, which is not equal to $D_{B}(\rho^{AB})$ in general. In this paper we only consider $D_{B}(\rho^{AB})$
and hence ignore the subscript $B$ in the following.

The optimization procedure involved  in the calculation of quantum discord  prevents one to write an analytical expression for quantum discord  even for simple two-qubit systems. Quantum discord  is analytically computed only for a few families of states including the Bell-diagonal states \cite{LuoPRA2008,LangPRL2010}, two-qubit $X$ states \cite{AliPRA2010,ChenPRA2011}, two-qubit rank-2 states \cite{ShiJPA2011}, a class of rank-2 states of $4\otimes 2$ systems \cite{CenPRA2011}, and Gaussian states of the continuous variable systems \cite{AdessoPRL2010}. Moreover, based on the optimization of the conditional entropy, an algorithm to calculate the quantum discord of the two-qubit states  is presented in \cite{GirolamiPRA2011}. It is also important to have some computable bounds on the quantum discord and some authors have obtained such bounds \cite{XiJPA2011,YuQuant1102.1301}.

In this paper, we consider two-qubit states and obtain a simple relation for the quantum conditional entropy $S(\rho^{AB}|\{\Pi^B_{k}\})$ as the difference of two Shannon entropies. This form of the conditional entropy enables one to calculate its minimum for some classes of two-qubit states such as Bell-diagonal states \cite{LuoPRA2008} and a three-parameter subclass of X states \cite{AliPRA2010,ChenPRA2011}, without any  minimization procedure. Although, the quantum discord of these states is already obtained analytically, but the presented form for the conditional entropy enables one to obtain the previous results in a much simpler  manner.
 An analytical progress in the minimization of the conditional entropy of a general two-qubit state is also presented. Our algorithm presents a necessary and sufficient condition for a measurement to be the stationary measurement for the conditional entropy. The presented condition is, to the best of our knowledg, more efficient relative to the earlier presented optimization algorithms.    Moreover, we obtain  a computable tight upper bound on the quantum discord of an arbitrary two-qubit state. We also present sufficient conditions under which the upper bound is tigh, and  exemplify  this bound for  a two-parameter class of states and show that the bound may be tight even in the absence of such sufficient conditions.

The paper is organized as follows. In section II,  we consider a general two-qubit system and present a simple relation for the quantum conditional entropy. In this section, we also evaluate the quantum discord for some classes of states. In section III, we present a tight upper bound on the discord of a general two-qubit state. Section IV is devoted to the optimization procedure. An analytical conditions under which the conditional entropy is stationary is presented in this section. The paper is concluded in section V with a brief discussion.

\section{Conditional entropy}
A general two-qubit state can be written in the Hilbert-Schmidt representation as
\begin{equation}\label{RhoTwoQubit}
\rho^{AB}=\frac{1}{4}\left(\Id\otimes \Id+\vec{x}\cdot{\mathbf \sigma}\otimes \Id+\Id\otimes {\vec y}\cdot{\mathbf \sigma}+\sum_{i,j=1}^3t_{ij} \sigma_i\otimes\sigma_j\right).
\end{equation}
Here $\Id$ stands for the identity operator,  $\{\sigma_i\}_{i=1}^3$ are the standard Pauli matrices,  $\vec{x}$ and $\vec{y}$ are coherence vectors of the subsystems $A$ and $B$, respectively,  and $T=(t_{ij})$ is the correlation matrix. Therefore, to each state $\rho^{AB}$ we associate the triple $\{\vec{x},\vec{y},T\}$.
Since quantum correlations are invariant under local unitary transformation, i.e. under transformations of the form $(U_1\otimes U_2)\rho^{AB}(U_1^\dag\otimes U_2^\dag)$ with $U_1,U_2\in SU(2)$, we can, without loss of generality, restrict our considerations
to some representative class of states described
by less number of parameters \cite{HorodeckiPRA1996}. Under such transformations, the triple $\{\vec{x},\vec{y},T\}$ transforms as
\begin{equation}\label{O1O2}
{\vec x}\rightarrow O_1{\vec x},\qquad {\vec y}\rightarrow O_2{\vec y},\qquad T\rightarrow O_1TO_2^{\T},
\end{equation}
where $O_i$'s corresponds to $U_i$'s via $U_i({\vec a}\cdot {\vec \sigma})U_i^\dag= (O_i{\vec a})\cdot {\vec \sigma}$  \cite{HorodeckiPRA1996}.
 In view of this, any state of the two-qubit system can be written as $(U_1\otimes U_2)\rho^{AB}(U_1^\dag\otimes U_2^\dag)$, where $\rho^{AB}$ belongs to the representative class.
In the following we will consider a representative class such that $T$ is diagonal, namely $T=\diag\{t_1,t_2,t_3\}$.
Concerning this representative class, a general state of two-qubit system can be parameterized by nine real parameters $\vec{x}=(x_1,x_2,x_3)^{\T}$, $\vec{y}=(y_1,y_2,y_3)^{\T}$,  and $T=\diag\{t_1,t_2,t_3\}$, where $\T$ denotes transposition.  Accordingly, in the computational basis $\{\ket{00},\ket{01},\ket{10},\ket{11}\}$, a general state $\rho^{AB}$ of this representative class takes the following form
\begin{equation}\label{RhoTwoQubitMatrix}
\rho^{AB}=\frac{1}{4}\left(\begin{array}{cccc}\rho_{11} & y_1-iy_2 & x_1-ix_2 & t_1-t_2 \\
y_1+iy_2 & \rho_{22} & t_1+t_2 & x_1-ix_2 \\
x_1+ix_2 & t_1+t_2 & \rho_{33} & y_1-iy_2 \\
t_1-t_2 & x_1+ix_2 & y_1+iy_2 & \rho_{44}
\end{array}\right),
\end{equation}
where
\begin{eqnarray}\nonumber
\rho_{11}=1+x_3+y_3+t_3, \quad \rho_{22}=1+x_3-y_3-t_3, \\ \nonumber
\rho_{33}=1-x_3+y_3-t_3, \quad  \rho_{44}=1-x_3-y_3+t_3.
\end{eqnarray}
Now let us turn our attention on the von Neumann measurement on the qubit B.
A general such measurement can be written as
\begin{equation}\label{ProjectiveUk}
\Pi_k^B=U\project{k} U^\dag,
\end{equation}
where  $\{\project{k}\}_{k=0}^1$ is the von Neumann projection operators in the standard basis of the qubit $B$, and $U\in SU(2)$. An arbitrary element of $SU(2)$ can be factorized as \cite{GilmoreBook1974}
\begin{equation}
U=\Omega_2\Omega_1,
\end{equation}
with $\Omega_2$ and $\Omega_1$ defined by
\begin{equation}
\Omega_2=\left(\begin{array}{cc}\cos{\frac{\theta}{2}} & -\e^{-i\phi}\sin{\frac{\theta}{2}} \\ \e^{i\phi}\sin{\frac{\theta}{2}} & \cos{\frac{\theta}{2}}\end{array}\right),
\;\; \Omega_1=\left(\begin{array}{cc} \e^{i\eta/2} & 0 \\ 0 & \e^{-i\eta/2} \end{array}\right),
\end{equation}
for $0\le \theta \le \pi$, $0\le \phi \le 2\pi$, and $0\le \eta \le \pi$.
Therefor,  we get $\Pi_k^B=\Omega_2 \project{k} \Omega_2^\dag=\project{\sigma\cdot {\hat n}_k}$ where ${\hat n}_0=-{\hat n}_1={\hat n}$, with ${\hat n}=(\sin{\theta}\cos{\phi}, \sin{\theta}\sin{\phi}, \cos{\theta})^{\T}$.   This, explicitly, shows that only two independent parameters  $\theta$ and $\phi$ are needed to characterize a general local von Neumann measurement on the two-qubit systems.  We can also write these orthogonal projections in the Bloch representation as
\begin{equation}\label{ProjectiveCV}
\Pi_k^B=\frac{1}{2}\left(\Id+{\hat n}_k\cdot \sigma\right), \qquad k=0,1.
\end{equation}
Therefore ${\hat n}_0$ and ${\hat n}_1$ are coherence vectors of $\Pi^B_0$ and $\Pi^B_1$, respectively.
For further use, we calculate the expression $\Pi_k^B\sigma_j \Pi_k^B$ for $k=0,1$ and $j=1,2,3$, i.e.
\begin{eqnarray}\nonumber
\Pi_k^B\sigma_j \Pi_k^B&=&\project{\sigma\cdot {\hat n}_k}\sigma_j\project{\sigma\cdot {\hat n}_k} \\ \label{PiSigmaPi}
&=& \Tr{\left(\Pi_k^B\sigma_j\right)}\Pi_k^B=(\hat{n}_k)_j \Pi_k^B.
\end{eqnarray}
 Now we are in the position to perform the von Neumann measurement $\{\Pi_k^B\}_{k=0}^1$  on the qubit $B$. This transforms the total state
 $\rho^{AB}$ to the ensemble $\{p_k, \rho^{AB}_k\}_{k=0}^{1}$ such that
 \begin{equation}\label{RhoAk-Definition}
 \rho^{AB}_k=\frac{1}{p_k}(\Id\otimes \Pi_k^B)\rho^{AB}(\Id\otimes \Pi_k^B),
 \end{equation}
 with $p_k=\Tr{[(\Id\otimes \Pi_k^B)\rho^{AB}(\Id\otimes \Pi_k^B)]}$. By using Eqs. (\ref{RhoTwoQubit}) and (\ref{ProjectiveCV}) in  (\ref{RhoAk-Definition}) and invoking relation (\ref{PiSigmaPi}) we get
 \begin{equation}
 \rho^{AB}_k=\rho^A_k\otimes \Pi_k^B,
 \end{equation}
 where
 \begin{equation}\label{RhoAk}
 \rho^A_k=\frac{1}{2}\left(\Id+\vec{{\tilde x}}_k\cdot\sigma\right),
 \end{equation}
 is the post-measurement state of the qubit $A$, associated to the measurement result $k$ with the  corresponding probability
 \begin{equation}\label{pk}
 p_k=\frac{1}{2}\left(1+\vec{y}^{\T} \hat{n}_k\right).
 \end{equation}
 In Eq. (\ref{RhoAk}), the post-measurement coherence vector $\vec{{\tilde x}}_k$ is defined by
 \begin{equation}
\vec{{\tilde x}}_k=\frac{\vec{x}+T{\hat n}_k}{1+\vec{y}^{\T} \hat{n}_k}.
 \end{equation}
 The quantum conditional entropy with respect to the above measurement is defined by
 \begin{equation}\label{QC-Entropy-1}
 S(\rho^A|\{\Pi_k^B\})=p_0 S(\rho^A_0|\{\Pi_k^B\})+p_1 S(\rho^A_1|\{\Pi_k^B\}).
 \end{equation}
 Now using
 \begin{equation}
 \frac{1}{2}\left(1\pm |\vec{{\tilde x}}_k|\right)= \frac{1}{4p_k}\left(2p_{k}\pm |\vec{x}+T{\hat n}_k|\right),
 \end{equation}
 as the eigenvalues of $\rho^A_{k}$, for $k=1,2$, and after some calculations we arrive at the following observation  for the conditional entropy.
 \begin{observation}
 Conditional entropy can be written as
  \begin{equation}\label{QC-Entropy-2}
  S(\rho^A|\{\Pi_k^B\})=h_4(\vec{w})-h_2(p_0),
 \end{equation}
where above, and throughout this paper, $h_2(x)$ denotes the binary Shannon entropy  \cite{NielsenBook2010} defined by
 \begin{equation}
 h_2(x)=-x\log_{2} x-(1-x)\log_{2}(1-x),
 \end{equation}
 and $h_{m}(q_1,\cdots,q_m)=-\sum_{i=1}^m q_i\log_{2}{q_i}$ is the Shannon entropy of the probabilities $\{q_1,\cdots,q_m\}$.
In particular,  $h_4(\vec{w})=-\sum_{i=1}^4 w_i\log_{2}{w_i}$ is  the Shannon entropy of the probabilities
 \begin{equation}\label{w1234}
 w_{1,2}=\frac{2p_0\pm |\vec{x}+T{\hat n}|}{4},\quad
 w_{3,4}=\frac{2p_1\pm |\vec{x}-T{\hat n}|}{4}.
 \end{equation}
 \end{observation}
 Note that under the transformation  ${\hat n}\rightarrow -{\hat n}$, corresponding to $\theta\rightarrow \pi-\theta$ and $\phi\rightarrow \phi\pm\pi$, the probabilities (\ref{pk}) and (\ref{w1234}) transform as $p_0\leftrightarrow p_1$, $w_1\leftrightarrow w_3$ and $w_2\leftrightarrow w_4$, leaving therefore the conditional entropy invariant.
 On the other hand,
 if we perform local unitary transformation (\ref{O1O2}) on the density matrix, the probabilities (\ref{pk}) and (\ref{w1234}), and hence the conditional entropy do not change provided we perform the transformation ${\hat n}\rightarrow O_2{\hat n}$. This implies that if we find $\hat{n}^\ast$ as the optimal measurement for a given state $\rho^{AB}$ of the representative class, one can obtain the optimal one for any state $\tilde{\rho}^{AB}=(U_1\otimes U_2)\rho^{AB}(U_1^\dag\otimes U_2^\dag)$, just by the transformation ${\hat n}^\ast\rightarrow O_2{\hat n}^\ast$.

Now the aim is to minimize the above conditional entropy.
Before we give a general procedure for optimization, we give below some special classes of states for which the quantum discord can be evaluated without the need for any optimization.

\noindent {\bf Quantum Discord of   States with $T^{\T}\vec{x}=0$ and $\vec{y}=0$.}
Let us consider a three-parameter class of states such that $\vec{y}=0$ and $\vec{x}$ belongs to the kernel of $T^{\T}$, i.e. $T^{\T}\vec{x}=0$.
For this class of states we get
\begin{equation}\label{x+Tn=x-Tn}
|x+T{\hat n}|=|x-T{\hat n}|=\sqrt{x^2+{{\hat n}^{\T}}T^{\T}T{\hat n}},
\end{equation}
and therefore
\begin{equation}
p_0=p_1=\frac{1}{2},\qquad w_{1,2}=w_{3,4}=\frac{1}{4}\left(1\pm|{\vec x}+T{\hat n}|\right).
\end{equation}
In this case we get
\begin{equation}
h_4(\vec{w})=h_2\left(\frac{1+|x+T{\hat n}|}{2}\right)+1,\qquad h_2(p_0)=1,
\end{equation}
and therefore Eq. (\ref{QC-Entropy-2}) reduces to
\begin{equation}
S(\rho^A|\{\hat{n}\})=h_2\left(\frac{1+|x+T{\hat n}|}{2}\right).
\end{equation}
Clearly, the minimum of the above equation occurs whenever  $|x+T{\hat n}|=\sqrt{x^2+{\hat n}^{\T}T^{\T}T{\hat n}}$ takes its maximum value. This happens when ${\hat n}$ is an eigenvector of $T^{\T}T$ corresponding to the largest eigenvalue $t_{\max}^2$, therefore
\begin{equation}
\min S(\rho^A|\{{\hat n}\})=h_2\left(\frac{1+\sqrt{x^2+t^2_{\max}}}{2}\right).
\end{equation}
For these states quantum discord is
\begin{eqnarray}\nonumber
D(\rho^{AB})=1-h_4(\mu_1,\mu_2,\mu_3,\mu_4)+h_2\left(\frac{1+\sqrt{x^2+t^2_{\max}}}{2}\right),
\end{eqnarray}
where $\{\mu_i\}_{i=1}^{4}$ are eigenvalues of $\rho^{AB}$.

Note that if we concern the representative class of states (\ref{RhoTwoQubitMatrix}) for which $T=\diag\{t_1,t_2,t_3\}$, then condition $T^{\T}\vec{x}=0$ requires that $t_ix_i=0$ for $i=1,2,3$. Hence if we take, without loss of generality,  $|t_1|\ge |t_2|\ge |t_3|\ge0$ and set $\vec{y}=0$,  then the states corresponding to this class can be obtained from the general form of Eq. (\ref{RhoTwoQubitMatrix}) as: \\
(i) {\it States with $x_1=x_2=x_3=0$.---}  This corresponds to the Bell-diagonal states.
In this case discord reads as
\begin{eqnarray}\nonumber
D(\rho^{AB})=1-h_4(\mu_1,\mu_2,\mu_3,\mu_4)+h_2\left(\frac{1+|t_1|}{2}\right),
\end{eqnarray}
with $\{\mu_i\}_{i=1}^{4}$ as eigenvalues of $\rho^{AB}$ given by
\begin{eqnarray}\nonumber
\mu_{1,2}=\frac{1}{4}\left(1\pm t_1\pm t_2-t_3\right), \quad
\mu_{3,4}=\frac{1}{4}\left(1\pm t_1\mp t_2+t_3\right).
\end{eqnarray}
This is in agreement with the result obtained by Luo in \cite{LuoPRA2008} (see also \cite{LangPRL2010}).
\\
(ii) {\it States with $x_1=x_2=t_3=0$.---} This corresponds to a three-parameter subclass of the so-called $X$ states.  In this case, the discord is obtained as
\begin{eqnarray}\nonumber
D(\rho^{AB})=1-h_4(\mu_1,\mu_2,\mu_3,\mu_4)+h_2\left(\frac{1+\sqrt{t_1^2+x_3^2}}{2}\right),
\end{eqnarray}
where $\{\mu_i\}_{i=1}^{4}$ are eigenvalues of $\rho^{AB}$ given by
\begin{eqnarray}\nonumber
\mu_{1,2}&=&\frac{1}{4}\left(1\pm\sqrt{(t_1+t_2)^2+x_3^2}\right), \\ \nonumber
\mu_{3,4}&=&\frac{1}{4}\left(1\pm\sqrt{(t_1-t_2)^2+x_3^2}\right).
\end{eqnarray}
(iii) {\it States with  $x_1=t_2=t_3=0$.---} This corresponds to a three-parameter subclass of the zero-discord states. \\
(iv) {\it States with  $t_1=t_2=t_3=0$.---} This also corresponds to a three-parameter subclass of the zero-discord states. \\

\section{Tight upper bound of quantum discord}
Interestingly, the above examples motivate us to  introduce an upper bound for the quantum discord.
Let $\mathcal{R}$ be the subspace of $\mathbb{R}^3$ spanned by $T^{\T}\vec{x}$ and $\vec{y}$, i.e.  $\mathcal{R}=\textrm{span}\{T^{\T}\vec{x},\vec{y}\}$, and let  $\mathcal{R}^{\perp}$ denotes the orthogonal complement of  $\mathcal{R}$, i.e. the set of all vectors in $\mathbb{R}^3$ that are orthogonal to every element of $\mathcal{R}$. Hence we have  $\mathcal{R}+\mathcal{R}^{\perp}=\mathbb{R}^3$.
\begin{theorem}\label{UpperBoundT}
The conditional entropy (\ref{QC-Entropy-2}) is bounded from above as
\begin{eqnarray}\label{CE-UpperBound}
\min_{\{\Pi_k^B\}}{S(\rho^A|\{\Pi_k^B\})}\le  h_2\left(\frac{1+\sqrt{x^2+t_{0}^2}}{2}\right),
\end{eqnarray}
where $x=|\vec{x}|$, and
\begin{equation}\label{t02}
t_{0}^2=\max_{\hat{e}_0\in \mathcal{R}^{\perp}}\hat{e}_0^{\T}T^{\T}T\hat{e}_0.
\end{equation}
Accordingly, the classical correlation and the quantum discord have the following lower and upper bounds, respectively
\begin{eqnarray}\label{C-LowerBound}
C(\rho^{AB})&\ge & S(\rho^A)-h_2\left(\frac{1+\sqrt{x^2+t_{0}^2}}{2}\right), \\ \label{D-UpperBound}
Q(\rho^{AB})& \le & S(\rho^B)-S(\rho^{AB})+ h_2\left(\frac{1+\sqrt{x^2+t_{0}^2}}{2}\right).
\end{eqnarray}
\end{theorem}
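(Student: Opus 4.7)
The idea of the proof is to restrict the minimization over measurement directions to the subspace $\mathcal{R}^{\perp}$; since this is a subset of all unit vectors, the minimum over it is an upper bound for the true minimum.  The plan is to show that on this restricted set the conditional entropy simplifies to a single-parameter quantity that can be minimized in closed form.

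First I would pick any $\hat{n}\in\mathcal{R}^{\perp}$ with $|\hat{n}|=1$.  By definition $\hat{n}$ is orthogonal to both $\vec{y}$ and $T^t\vec{x}$, so $\vec{y}^{\,t}\hat{n}=0$ and $\vec{x}^{\,t}T\hat{n}=(T^t\vec{x})^{t}\hat{n}=0$.  The first identity, substituted into Eq.~(\ref{pk}), gives $p_0=p_1=\tfrac{1}{2}$, so $h_2(p_0)=1$.  The second identity, together with $|\vec{x}\pm T\hat{n}|^{2}=x^{2}\pm 2\vec{x}^{\,t}T\hat{n}+\hat{n}^{t}T^{t}T\hat{n}$, yields
\begin{equation}
|\vec{x}+T\hat{n}|=|\vec{x}-T\hat{n}|=\sqrt{x^{2}+\hat{n}^{t}T^{t}T\hat{n}}\equiv r,
\end{equation}
exactly as in Eq.~(\ref{x+Tn=x-Tn}).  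Hence the four probabilities in Eq.~(\ref{w1234}) collapse to $w_{1}=w_{3}=(1+r)/4$ and $w_{2}=w_{4}=(1-r)/4$.

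Second, I would evaluate the Shannon entropy of this degenerate distribution.  A short direct calculation (splitting each $\log_2$ of a quarter into a $\log_2$ of a half plus one) shows $h_4(\vec{w})=1+h_2\bigl(\tfrac{1+r}{2}\bigr)$, so by Eq.~(\ref{QC-Entropy-2})
\begin{equation}
S(\rho^{A}|\{\Pi_k^B\})=h_2\!\left(\frac{1+\sqrt{x^{2}+\hat{n}^{t}T^{t}T\hat{n}}}{2}\right).
\end{equation}
Since $h_2\!\left(\tfrac{1+r}{2}\right)$ is monotonically decreasing in $r\in[0,1]$, minimizing over $\hat{n}\in\mathcal{R}^{\perp}$ amounts to maximizing the quadratic form $\hat{n}^{t}T^{t}T\hat{n}$ over unit vectors in $\mathcal{R}^{\perp}$, which by definition is $t_{0}^{2}$ in Eq.~(\ref{t02}).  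This proves the inequality~(\ref{CE-UpperBound}), because the minimum of $S(\rho^{A}|\{\Pi_k^B\})$ over all von Neumann measurements cannot exceed its minimum over the restricted family $\hat{n}\in\mathcal{R}^{\perp}$.

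Finally, the bounds (\ref{C-LowerBound}) and (\ref{D-UpperBound}) are immediate consequences.  From Eq.~(\ref{CC}) the classical correlation equals $S(\rho^{A})$ minus the minimum of the conditional entropy, so an upper bound on that minimum translates to a lower bound on $C(\rho^{AB})$.  For the discord, one writes $D=I-C$ with $I$ given by Eq.~(\ref{MI}); substituting the lower bound on $C$ cancels the $S(\rho^{A})$ term and leaves precisely Eq.~(\ref{D-UpperBound}).  The main conceptual point, and the only nontrivial step, is recognizing that the choice $\mathcal{R}^{\perp}=\mathrm{span}\{T^{t}\vec{x},\vec{y}\}^{\perp}$ is the maximal subspace on which both cross terms $\vec{y}^{\,t}\hat{n}$ and $\vec{x}^{\,t}T\hat{n}$ vanish simultaneously, reducing the problem to the already-solved case treated in the ``$T^t\vec{x}=0$ and $\vec{y}=0$'' example; the rest is routine.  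The bound is nontrivial only when $\mathcal{R}^{\perp}\neq\{0\}$, i.e.\ when $\dim\mathcal{R}\le 2$, which is automatic in $\mathbb{R}^{3}$ unless $T^{t}\vec{x}$ and $\vec{y}$ happen to span everything.
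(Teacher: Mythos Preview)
Your proof is correct and follows essentially the same route as the paper: restrict the measurement direction to $\mathcal{R}^{\perp}$, observe that both $p_0=p_1=\tfrac12$ and $|\vec{x}+T\hat{n}|=|\vec{x}-T\hat{n}|$, so the conditional entropy collapses to a binary entropy that is minimized by maximizing $\hat{n}^{t}T^{t}T\hat{n}$ over $\mathcal{R}^{\perp}$. Your only slip is the closing caveat: two vectors $T^{t}\vec{x}$ and $\vec{y}$ can never span all of $\mathbb{R}^{3}$, so $\dim\mathcal{R}\le 2$ and $\mathcal{R}^{\perp}\neq\{0\}$ always hold, making the bound always applicable.
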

\begin{proof}
Let us concern about all measurement vectors  $\hat{e}_0$ living in  $\mathcal{R}^{\perp}$, i.e.
$\vec{y}^{\T}\hat{e}_0=0$, $(T^{\T} \vec{x})^{\T}\hat{e}_0=0$; then
\begin{equation}
|x+T{\hat e}_0|=|x-T{\hat e}_0|=\sqrt{x^2+{{\hat e}_0^{\T}}T^{\T}T{\hat e}_0},
\end{equation}
and
\begin{equation}
p_0=p_1=\frac{1}{2},\quad w_{1,2}=w_{3,4}=\frac{1}{4}\left(1\pm|{\vec x}+T{\hat e}_0|\right).
\end{equation}
Therefore
\begin{eqnarray}\nonumber
\min_{\hat{n}\in \mathbb{R}^3}{S(\rho^A|\{\hat{n}\})} &\le & \min_{\hat{e}_0\in \mathcal{R}^{\perp}}S(\rho^A|\{\hat{e}_0\}) \\ \nonumber
&=& \min_{\hat{e}_0\in \mathcal{R}^{\perp}}h_2\left(\frac{1+\sqrt{x^2+{\hat e}_0^{\T} T^{\T}T{\hat e}_0}}{2}\right) \\ \label{t0}
&=& h_2\left(\frac{1+\sqrt{x^2+t_{0}^2}}{2}\right),
\end{eqnarray}
where $t_{0}^2=\max\hat{e}_0^{\T}T^{\T}T\hat{e}_0$ with the maximum  taken over all unit vectors $\hat{e}_0\in \mathcal{R}^{\perp}$.
Evidently, if two vectors $T^{\T}\vec{x}$ and $\vec{y}$ be nonzero and linearly  independent then $\dim{\mathcal{R}^{\perp}}=1$, so that the unit vector $\hat{e}_0\in\mathcal{R}^{\perp}$ is unique. Otherwise $\dim{\mathcal{R}^{\perp}}>1$, so  we can choose $\hat{e}_0\in \mathcal{R}^{\perp}$ such that  $t_{0}^2=\underset{\hat{e}_0\in \mathcal{R}^{\perp}}{\max}\hat{e}_0^{\T}T^{\T}T\hat{e}_0$, giving a tighter bound.  This completes the proof of (\ref{CE-UpperBound}).  Using Eq. (\ref{CE-UpperBound})   in Eqs.  (\ref{CC}) and (\ref{Discord}), we obtain the desired bounds (\ref{C-LowerBound}) and (\ref{D-UpperBound}), respectively.
\end{proof}

Remarkably, the above bound is tight in the sense that  for all states that ${\mathcal{R}^{\perp}}=\mathbb{R}^3$, the bound is achieved.  This happens when  $T^{\T}\vec{x}=\vec{y}=0$,  gives therefore a sufficient condition for the reachable upper bound of the quantum discord. In this  case $t_0^2$ becomes the largest eigenvalue of $T^{\T}T$ and $\hat{e}_0$ is the corresponding eigenvector. Surprisingly, as we will show in the example below, it may happens  ${\mathcal{R}^{\perp}}\ne\mathbb{R}^3$ but the upper bound (\ref{CE-UpperBound}) is achieved.  In such  cases the absolute minimum of the conditional entropy happens for some vectors in $\mathcal{R}^{\perp}\subseteq\mathbb{R}^3$.
Recently an upper bound for the quantum discord is obtained in \cite{XiJPA2011} as $Q(\rho^{AB}) \le  S(\rho^B)$. A comparison of this with the upper bound presented in (\ref{D-UpperBound}) shows that for all states  for which  $S(\rho^{AB})-h_2\left(\frac{1+\sqrt{x^2+t_0^2}}{2}\right)>0$, our bound is stronger.

As an illustrative example let us consider a two-parameter class of states discussed in  \cite{AlQasimiPRA2011}
\begin{eqnarray}\label{Rho-ab}
\rho^{AB}(a,b)=\frac{1}{2}\left(\begin{array}{cccc}a & 0 & 0 & a \\ 0 & 1-a-b & 0 & 0 \\ 0 & 0 & 1-a+b & 0 \\ a & 0 & 0 & a\end{array}\right),
\end{eqnarray}
where $0\le a \le 1$ and $a-1 \le b \le 1-a$. The discord of this state is \cite{AlQasimiPRA2011}
\begin{equation}\label{Q-Rho-ab}
Q(\rho^{AB}(a,b))=\min\{a, q\},
\end{equation}
where
\begin{eqnarray}\nonumber
q &=&\frac{a}{2}\log_{2}{\left[\frac{4a^2}{(1-a)^2-b^2}\right]}-\frac{b}{2}\log_{2}{\left[\frac{(1+b)(1-a-b)}{(1-b)(1-a+b)}\right]}\\ \nonumber
&-&\frac{\sqrt{a^2+b^2}}{2}\log_{2}{\left[\frac{1+\sqrt{a^2+b^2}}{1-\sqrt{a^2+b^2}}\right]} \\ \label{q}
&+&\frac{1}{2}\log_{2}{\left[\frac{4((1-a)^2-b^2)}{(1-b^2)(1-a^2-b^2)}\right]} .
\end{eqnarray}
For this state we get
\begin{eqnarray}
\vec{x}=-\vec{y}=\left(\begin{array}{c}0\\0\\-b\end{array}\right), \quad
T=\left(\begin{array}{ccc}a & 0 & 0 \\ 0 & -a & 0 \\0 & 0 & 2a-1\end{array}\right).
\end{eqnarray}
Clearly,  $T^{\T}\vec{x}=-(2a-1)\vec{y}$, so that for all values of $a$ and $b\ne 0$ we have  $\mathcal{R}=\textrm{span}{\{\vec{y}\}}$. Therefore an arbitrary element of $\mathcal{R}^{\perp}$ can be written as $\hat{e}_0=(\cos{\phi},\sin{\phi},0)^{\T}$. In this case we get $\hat{e}_0^{\T} T^{\T} T\hat{e}_0=a^2$, which is independent of $\phi$, so that  $t_0^2=a^2$. This means that every vector in the two-dimensional subspace $\mathcal{R}^{\perp}$, corresponding to the $xy$-plane,  gives the desired upper bound.  Therefore, we obtain
\begin{eqnarray}
\min_{\hat{e}_0\in \mathcal{R}^{\perp}}S(\rho^A|\{\hat{e}_0\})=h_2\left(\frac{1+\sqrt{a^2+b^2}}{2}\right),
\end{eqnarray}
and
\begin{eqnarray}
S(\rho^A)&=&S(\rho^B)=h_2\left(\frac{1+b}{2}\right), \\
S(\rho^{AB})&=&h_3\left(a,\frac{1-a-b}{2},\frac{1-a+b}{2}\right),
\end{eqnarray}
where $h_3(\mu_1,\mu_2,\mu_3)=-\sum_{i=1}^3 \mu_i\log_{2}{\mu_i}$. After some calculations we find  that the inequality (\ref{D-UpperBound}) leads to
\begin{equation}
Q(\rho^{AB}(a,b))\le q,
\end{equation}
where $q$ is defined in Eq. (\ref{q}). A comparison of this with Eq. (\ref{Q-Rho-ab}) shows that   for some region of parameters, namely when $q\le a$, our upper bound is tight. Figs. \ref{Fig-1} and \ref{Fig-2} illustrate this fact. These figures also reveal that for this class of states our bound is stronger than the upper bound introduced in Ref. \cite{XiJPA2011}.

\begin{figure}[h]
\epsfxsize=8.5cm \ \centerline{\hspace{0cm}\epsfbox{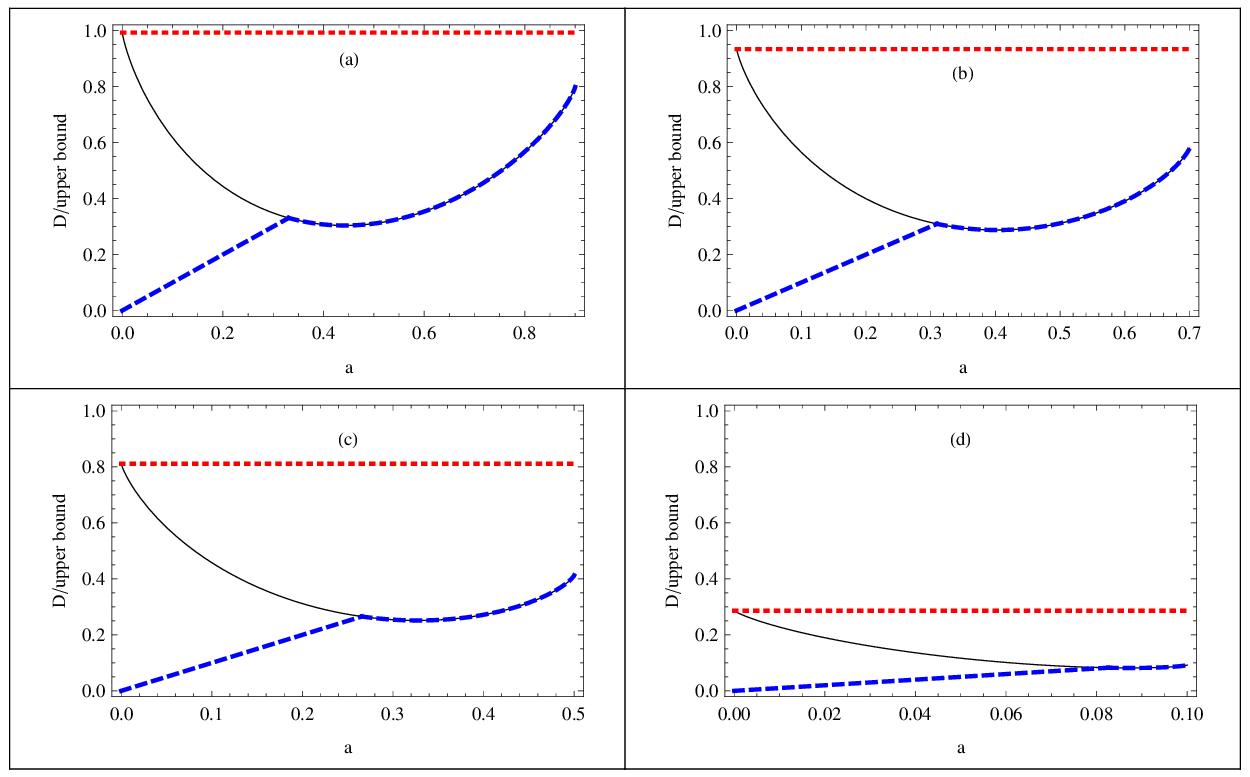}} \
\caption{(Color online) Quantum discord [dashed-blue lines], our upper bound [solid-black lines], and the upper bound of Ref. \cite{XiJPA2011}  [dotted-red lines] are plotted  versus $a$ for the state $\rho^{AB}(a,b)$ with:  (a) $b=0.1$, (b) $b=0.3$, (c) $b=0.5$, and  (d) $b=0.9$.}\label{Fig-1}
\end{figure}

\begin{figure}[h]
\epsfxsize=8.5cm \ \centerline{\hspace{0cm}\epsfbox{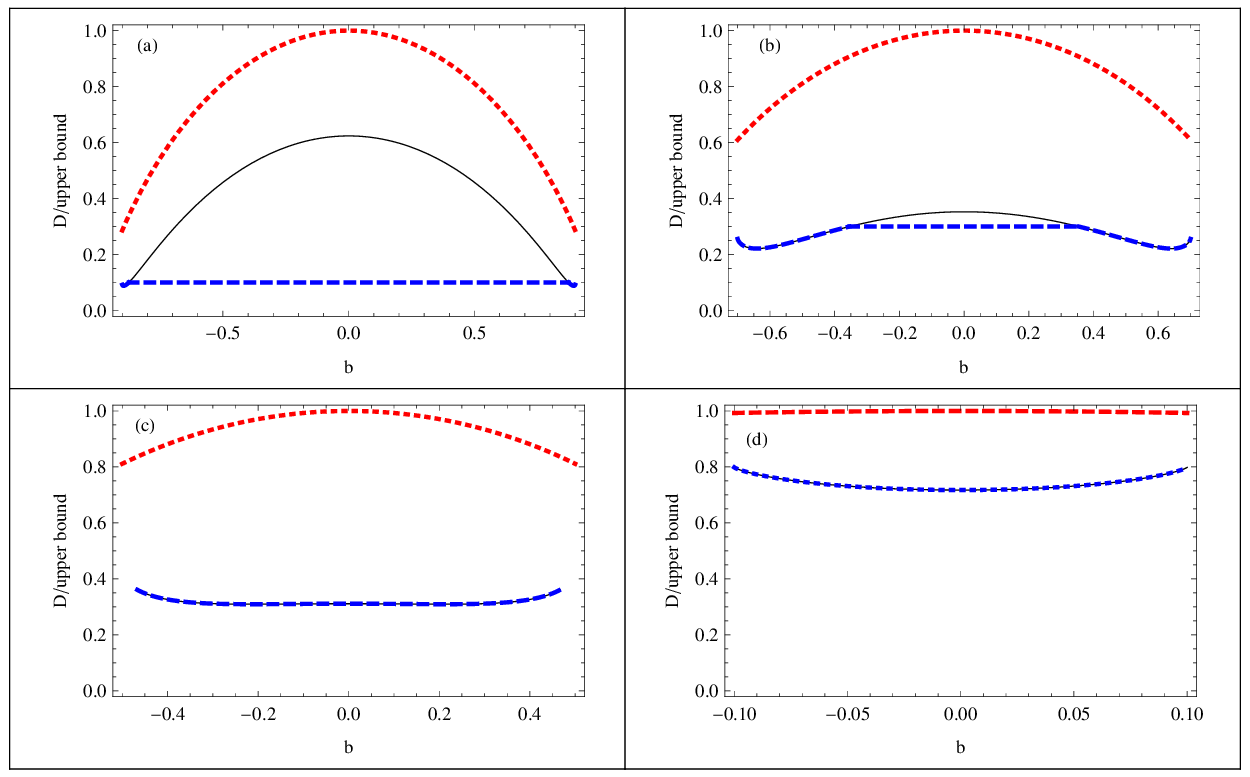}} \
\caption{(Color online) Quantum discord [dashed-blue lines], our upper bound [solid-black lines], and the upper bound of Ref. \cite{XiJPA2011}  [dotted-red lines] are plotted  versus $b$ for the state $\rho^{AB}(a,b)$  with: (a) $a=0.1$, (b) $a=0.3$, (c) $a=0.5$ (d) $a=0.9$. In the cases (c) and (d), our upper bound completely coincide with the quantum discord.}\label{Fig-2}
\end{figure}

\section{Optimization}
In this section we present an analytical procedure for optimization of the conditional entropy for  a general two-qubit state. We also  provide example for which one can obtain the minimum, analytically.
In order to determine the minimum of the conditional entropy (\ref{QC-Entropy-2}), we have to calculate its derivatives with respect to $\theta$ and $\phi$. To do this we need to calculate derivatives of the probabilities given by Eqs. (\ref{pk}) and (\ref{w1234}) with respect to $\theta$ and $\phi$. For instance, their derivative with respect to $\theta$ are as follows
\begin{eqnarray}
\frac{\partial p_{0,1}}{\partial \theta}&=&\pm \frac{1}{2}{\hat n}_{,\theta}^{\T} \;{\vec y},
\\
\frac{\partial w_{1,2}}{\partial \theta}&=&\frac{1}{4}{\hat n}_{,\theta}^{\T} \;\left[{\vec y}\pm T^{\T}{\hat Z^+}\right],
\\
\frac{\partial w_{3,4}}{\partial \theta}&=&\frac{1}{4}{\hat n}_{,\theta}^{\T} \;\left[{-\vec y}\pm T^{\T} {\hat Z^-}\right],
\end{eqnarray}
with
\begin{equation}
{\hat Z^+}=\frac{T{\hat n}+\vec{x}}{|T{\hat n}+\vec{x}|},\qquad {\hat Z^-}=\frac{T{\hat n}-\vec{x}}{|T{\hat n}-\vec{x}|},
\end{equation}
and the unit vector ${\hat n}_{,\theta}$ is defined by
\begin{equation}
{\hat n}_{,\theta}=\frac{\partial{\hat n}}{\partial \theta}=(\cos{\theta}\cos{\phi},\cos{\theta}\sin{\phi},-\sin{\theta})^{\T}.
\end{equation}
Evidently ${\hat n}\cdot{\hat n}_{,\theta}=0$.
By defining the nonunit vector ${\tilde n}_{,\phi}$ by
\begin{equation}
{\tilde n}_{,\phi}=\frac{\partial{\hat n}}{\partial \phi}=(-\sin{\theta}\sin{\phi},\sin{\theta}\cos{\phi},0)^{\T},
\end{equation}
 orthogonal to both ${\hat n}$ and ${\hat n}_{,\theta}$, we  get a similar equations for the derivatives of the probabilities with respect to $\phi$, but now ${\hat n}_{,\theta}$ is replaced by ${\tilde n}_{,\phi}$.
Finally using the above equations, we find the following relations for derivatives of the conditional entropy
\begin{eqnarray}\label{SC1}
\frac{\partial {S(\rho^A|\{\Pi_k^B\})}}{\partial\theta}&=&
-\frac{1}{4}{\hat n}_{,\theta}^{\T}\; \vec{A}, \\ \label{SC2}
 \frac{\partial {S(\rho^A|\{\Pi_k^B\})}}{\partial\phi}&=&
-\frac{1}{4}{\tilde n}_{,\phi}^{\T}\; \vec{A}
\end{eqnarray}
where $\vec{A}$ is a vector defined by
\begin{equation}\label{AVector}
\vec{A}=\left[\log_{2}{\frac{w_1w_2p_1^2}{w_3w_4p_0^2}}\right]\vec{y}+\left[\log_{2}{\frac{w_1}{w_2}}\right]T^{\T}{\hat Z^+}
+\left[\log_{2}{\frac{w_3}{w_4}}\right]T^{\T}{\hat Z^-}.
\end{equation}
Equations (\ref{SC1}) and (\ref{SC2}) enable one to present the following theorem as a necessary and sufficient condition for vector ${\hat n}=(\sin{\theta}\cos{\phi}, \sin{\theta}\sin{\phi}, \cos{\theta})^{\T}$   to be the stationary  measurement of  the conditional entropy, i.e. ${\partial {S(\rho^A|\{\Pi_k^B\})}}/{\partial\hat{n}}=0$, as follows
\begin{theorem}
Vector ${\hat n}\in\mathbb{R}^3$ is a stationary measurement  for the quantum conditional entropy if and only if vector $\vec{A}$, defined by Eq. (\ref{AVector}), satisfy the following condition
\begin{equation}\label{Stationary2}
{\hat n}_{\perp}^{\T} \;\vec{A}=0,
\end{equation}
where $\hat{n}_{\perp}$ is any vector perpendicular to $\hat{n}$, i.e. $\hat{n}^{\T}_{\perp}\hat{n}=0$.
\end{theorem}
\begin{remark}
Note that if we proceed the optimization progress  by using the Lagrange multiplier $\lambda$, having ${\hat n}^{\T}{\hat n}=1$ as a constraint, we find for the stationary condition $\dif\left[S(\rho^A|{\{\Pi_k^B}\})-\lambda({\hat n}^{\T}{\hat n}-1)\right]=0$ the following relation
\begin{equation}\label{Stationary3}
\vec {A}=A{\hat n},
\end{equation}
where  $\vec{A}$ is defined in Eq. (\ref{AVector}), and $A$ is given   by
\begin{eqnarray}\nonumber
A=&-&\left[\log_{2}{\frac{w_1}{w_2}}\right]\vec{x}^{\T}{\hat Z}^{+}+\left[\log_{2}{\frac{w_3}{w_4}}\right]\vec{x}^{\T}{\hat Z}^{-} \\
&-&4S(\rho^A|\{\Pi_k^B\})-\left[\log_{2}{\frac{w_1w_2w_3w_4}{p_0^2p_1^2}}\right].
\end{eqnarray}
\end{remark}
The stationary condition (\ref{Stationary3}) is equivalent to that given by Eq.  (\ref{Stationary2}) in the sense that both conditions require that in the extremum points,  vector $\vec{A}$ should be directed to  ${\hat n}$.
Unfortunately, these conditions do not have simple solutions, for ${\vec A}$  as well as $A$ depends  also on ${\hat n}$. Moreover, knowing the extremum points of the conditional entropy  is not enough to establish its minimum, and we need, in addition,  to evaluate the conditional entropy in the extremum points to find the minimum  one.
Below we exemplify these conditions for one particular class of states where we have already obtained the minimum of the conditional entropy without  optimization.

\noindent{\it States with $T^{\T}\vec{x}=0$ and $\vec{y}=0$.---}   For this class of states we have already shown that the optimal measurement lies  in the direction of the eigenvector of $T^{\T}T$, corresponding to the largest eigenvalue. We now reconsider this class of states and obtain the optimum measurement by using the optimization condition given above. For this class of states  vector ${\vec A}$ takes the following form
\begin{equation}
{\vec A}=\frac{2}{|T{\hat n}+\vec{x}|}\left[\log_{2}{\frac{1+|T{\hat n}+\vec{x}|}{1-|T{\hat n}+\vec{x}|}}\right]T^{\T}T{\hat n}.
\end{equation}
It is clear that the condition (\ref{Stationary2}) leads to ${\hat n}_{\perp}^{\T}T^{\T}T{\hat n}=0$, which has solutions  when ${\hat n}$ is an eigenvector of $T^{\T}T$. But the $3\times 3$ matrix $T^{\T}T$ has three nonnegative eigenvalues $\{t_1^2, t_2^2, t_3^2\}$ corresponding to the eigenvectors $\{\hat{e}_1,\hat{e}_2,\hat{e}_3\}$.
Therefore, in this particular case, three eigenvectors of $T^{\T}T$ are stationary measurements of the conditional entropy. For these directions, conditional entropy takes the following form
\begin{equation}
S(\rho^A|\{\hat{e}_k\})=h_2\left(\frac{1+\sqrt{x^2+t^2_k}}{2}\right), \quad \text{for} \quad k=1,2,3.
\end{equation}
Simple evaluation shows that the minimum of the above equation happens when $t^2_k$ corresponds to the largest eigenvalue of $T^{\T} T$.

Example above show that the stationary condition given in Eq. (\ref{Stationary2}) gives us, in general, more than one solution for the measurement direction $\hat{n}$, and we have to find  the optimal one by further evaluations. However,
the presented stationary condition is simple and computationally straightforward, in the sense that
it can be stored in a computer and that could be used for doing symbolic and numerical
calculations.

\section{Conclusion}
All difficulties in calculating the quantum discord arise from the difficulty in finding  the minimum of the quantum conditional entropy $S(\rho^{AB}|\{\Pi^B_{k}\})$ over all measurements on the subsystem  $B$. In this work, we have presented a simple relation for the quantum conditional entropy of a two-qubit system, as the difference of two Shannon entropies. Using  it,  we have obtained the quantum discord for  a class of states for which the conditions $T^{\T}\vec{x}=0$ and $\vec{y}=0$ are satisfied. This class of states includes the Bell-diagonal states, a three-parameter subclass of $X$ states,  and some zero-discord states.  Although, the quantum discord of these states is already obtained analytically, but the presented form for the conditional entropy enables one to obtain the previous results in a simpler manner.
 For these states, in particular, it is shown that the quantum conditional entropy reduces to the binary Shannon entropy, so that  in minimizing the conditional entropy we encountered  with the simple problem of minimizing binary Shannon entropy.

We have also shown that such obtained relation for the conditional entropy can be used to provide a computable tight upper bound on the quantum discord, so that the question of how large the quantum discord can possibly be, can be answered more reasonably.
We have presented sufficient conditions for the reachable upper bound of the quantum discord and have exemplified this bound for  a two-parameter class of states and have shown that the bound may be tight even in the absence of  sufficient conditions.

Based on the simple form of the conditional entropy, a general procedure of optimization of the conditional entropy is also given, and conditions under which conditional entropy is stationary are presented.  We show that our algorithm of optimization is efficient in the sense that it can be used to calculate quantum discord for some
classes of states, analytically.
The presented stationary condition is simple and computationally straightforward, in the sense that
it can be stored in a computer and that could be used for doing symbolic and numerical
calculations. The paper, therefore, can be
regarded as a further development in the calculation of the   quantum discord for an arbitrary state of two-qubit system.
The  method presented in this paper can be generalized to higher dimensional systems.
% \newpage

\newpage

\end{document}